\newtheorem{theorem}{Theorem}[]
\numberwithin{equation}{section}
\renewcommand{\section}{
  \@startsection
  {section}
  {1}
  {0pt}
  {1.1\baselineskip}
  {0.2\baselineskip}
  {\sc \centering}
}
\renewcommand{\subsection}{
  \@startsection
  {subsection}
  {1}
  {0pt}
  {1.1\baselineskip}
  {0.2\baselineskip}
  {\sc \centering}
}
\renewcommand{\subsubsection}{
  \@startsection
  {subsubsection}
  {1}
  {0pt}
  {1.1\baselineskip}
  {0.2\baselineskip}
  {\sc \centering}
}
\begin{document}

\begin{center}
{\large \sc Progression, Detection and Remission: Evolution of Chronic Myeloid Leukemia using a three-stage probabilistic model}\\
\vspace{0.4in}
{\large \sc Sonjoy Pan $^{\dagger}$}, 
{\large \sc Siddhartha P. Chakrabarty $^{\ast}$} 
and {\large \sc Soumyendu Raha $^{\P}$}\\
{$^{\dagger}$ Department of Mathematics, Swami Vivekananda University, Kolkata-700121, India}\\
{$^{\ast}$ Department of Mathematics, Indian Institute of Technology Guwahati, Guwahati-781039, India}\\
{$^{\P}$ Department of Computational and Data Sciences, Indian Institute of Science, Bangalore-560012, India}\\
\textit{$^{\dagger}$ sonjoy.pan@alumni.iitg.ac.in},
\textit{$^{\ast}$ pratim@iitg.ac.in},
and 
\textit{$^{\P}$ raha@iisc.ac.in}
\end{center}

\begin{abstract}

We present a three-stage probabilistic model for the progression of Chronic Myeloid Leukemia (CML), as manifested by the leukemic stem cells, progenitor cells and mature leukemic cells. This progression is captured through the process of cell division and cell mutation, with probabilities of occurrence being assigned to both of them. The key contributions of this study include, the determination of the expected number of the leukemic stem cells, progenitor cells, mature leukemic cells, as well as total number of these cells (in terms of probabilities, and contingent on the initial cell count), expected time to reach a threshold level of total and injurious leukemic cells, as well as the critical time when the disease changes its phases, the probability of extinction of CML, and the dynamics of CML evolution consequent to primary therapy. Finally, the results obtained are demonstrated with numerical illustrations.

{\it Keywords: CML; Disease progression; Extinction probability; Primary therapy}

\end{abstract}

\section{Introduction}
\label{Project_01_Section_Introduction}

The occurrence of leukemia is triggered by a disruption of the highly regulated and complex process of hematopoiesis, resulting from the generation of a handful of mutated blood cells \cite{Clapp16}. Leukemia can be acute or chronic and myeloid or lymphocytic, contingent on the maturity state and type of cells, respectively. Chronic Myeloid Leukemia (CML), a commonly occurring type of leukemia, is typically attributed to the translocation between chromosomes 9 and 22, resulting in a longer chromosome 9, and a shorter chromosome 22, than the typical chromosomes \cite{Pujo-Menjouet04}. CML, which is characterized by elevated levels of white blood cells, accounts for a fifth of the total leukemia incidences, of which 90 percent cases are associated with the Philadelphia (Ph) chromosome, resulting from the translocation of chromosomes 9 and 22. The typical progression of CML occurs in three stages \cite{Komariva09,Koch17}, namely, the chronic phase (a period of relatively low cell count, accompanied by a high level of cellular differentiation), the accelerated phase (during which higher cell levels is observed, along with decline in the extent of differentiation), and the blast phase (where an effectively uncontrolled cell growth, but low extent of differentiation, is observed). In absence of intervention, the disease can progress to the accelerated phase over a span of 7-10 years \cite{Koch17}, which then progresses to the blast phase.

Michor et al. \cite{Michor05}, in a basic model, proposed to capture the dynamics and progression of CML, and considered an ordinary differential equation (ODE) driven setup for normal and leukemic (non-resistant and resistant) cells, for a four-compartment model of stem cells, progenitor cells, differentiated cells and terminally differentiated cells. The model evaluated the success of imatinib during the molecularly targeted therapy in case of CML, and concluded that, while imatinib achieves success in case of inhibition of differentiated leukemic cells, it does not cause a decline in the levels of leukemic stem cells. Further, the authors also make an estimation of the likelihood of development of resistance to imatinib, as well as the time-point at which the resistance is detected. It is believed that the proliferation of cancerous stem cells, and progenitor cells brings forth, and abets the cellular growth in CML \cite{Komariva07}. In this paper \cite{Komariva07}, a stochastic model is considered, with primitive and proliferating, as well as quiescent CML cell populations, to explain the clinically observed data exhibiting, bi-phasic, one-phasic and reverse bi-phasic decline pattern, as also the emanation of quiescent stem cells and treatment resistance. An agent based model (ABM) to capture the ``age-structure'' of CML stem cells, is proposed in \cite{Roeder09} through the usage of normal hematopoiesis, leukemia genesis (for normal and leukemia cells) and leukemia treatment (using imatinib). A mathematical model considered in \cite{Radulescu14}, included the leukemic and normal cells, both stem-like and differentiated, leading to a four-dimensional coupled delay differential equation (DDE) system. An ODE driven model to analyze the dynamics of imatinib therapy is developed in \cite{Clapp15}, with the focus being on patients exhibiting a non-monotonic BCR-ABL ratio, during the course of the therapy, by considering mature cells (quiescent, cycling, progenitor and mature), along with the concentration of immune cells. A simplified version of the model in \cite{Clapp15} was considered in \cite{Besse18}, based on the argument that a three-compartment model (two for leukemic cells and one for immune cells) can capture the dynamics of the CML-immune response.

Population of hematopoietic stem cells and differentiated cells, both normal and cancerous, were considered in a compartment model in \cite{Ainseba10}, in the paradigm of an optimal control problem, with the objective of minimizing the treatment cost and load of the cancerous cells. In another control theoretic problem \cite{Radulescu16}, a two population model, involving short-term hematopoietic stem cells and differentiated cells, possessing the ability to self-renew, and lacking the same, respectively, is considered, in presence of imatinib. In a recent work on optimization of combination therapy of leukemia, of imatinib and interferon-$\alpha$, populations of CML cells and effector T cells cytotoxic to CML, are incorporated into the model \cite{Bunimovich-Mendrazitsky17}.

A stochastic approach to address the problem of acquired drug resistance to imatinib, and the consequent disease progression was considered in \cite{Leder11}, wherein a multi-type branching process was used to encapsulate the evolution of several clones of CML cells, resistant to imatinib. The model for hematopoiesis in CML, emphasized the stochastic aspects applicable in case of abnormalities in hematopoiesis, that is typical in cases of blood cancers. To this end, stochastic compartments and stochastic competition were taken into account in the hybrid discrete-continuous modeling framework. A causal Bayesian network model for explaining the progression of CML from the chronic phase to blast crisis is presented in \cite{Koch17}. In a recent article, the authors presented deterministic and stochastic models for a triple treatment scenario of Wnt/$\beta$-catenin signaling, combined with Tyrosine Kinase Inhibitors (TKI) and IFN-$\alpha$ \cite{Pan20}.

We also briefly review here, some of the literature on usage of branching process in epidemiological and cancer modeling. An epidemiological model for determination of the number of generations until extinction (for a branching process), in the paradigm of infectious disease modeling was discussed in \cite{Farrington99}. Accordingly, several processes were used to obtain the generation distributions, and reconciled in terms of application, to several infectious disease settings. In \cite{Durrett10}, the modeling was carried out for a population of cancerous cells, exhibiting exponential growth, progressing to treatment resistance, as a result of one mutation, followed by development of disease phenotype, after subsequent mutations. From a stochastic perspective, the distribution of the arrival time of the $k$-th mutation, as well as the growth progression of the count of type $k$ cells, was established. A random fitness increments approach, to model the progression of tumors was studied in \cite{Durrett10b}. A multi-type branching process was analyzed for accumulated mutations in cells, chosen from some distribution. An ovarian cancer model, with the objective of ascertaining an opportune window for screening was discussed in \cite{Danesh12}. The model is driven by a branching process, to describe the growth and progression of three cellular populations, namely, primary, peritoneal, and metastatic, with the model parameters being reconciled with the clinical data.

From a therapeutic perspective, TKIs, such as imatinib can have curative impact on CML, albeit the longer periods of treatment being required to achieve a desired result \cite{Lenaerts10}. A mathematical model focused on the combination treatment of imatinib (for molecular targeted therapy) and interferon alfa-2a (IFN-$\alpha$) (immunotherapy) presented in \cite{Berezansky12} incorporated time-varying delays in treatment terms, and demonstrated how the dual therapy led to improved outcomes in case of CML patients.

In this paper, we study on the dynamics of CML evolution, considering various stages of disease progression and different types of cells involved in leukemic cancer. We also determine the expected time to reach a possible state of CML and the chance of disease extinction. The rest of the paper is organized as follows. In Section \ref{Project_01_Section_Model}, the probabilistic modeling of CML progression is described, using the theory of multi-type branching processes. Some analytical results on cells count, estimated time, and associated probabilities for occurrence of various kinds of leukemic cells and subsequent conditions are established in Section \ref{Project_01_Section_Model_Analysis}. Section \ref{Project_01_Section_Results} is devoted to several numerical simulations. Finally, the conclusion is summarized in Section \ref{Project_01_Section_Conclusion}.

\section{Model}
\label{Project_01_Section_Model}

In CML patients, the leukemic cells develop through various cell differentiation and cell mutations. Initially, the leukemic stem cells are produced in the bone marrow, and then these cells get transformed to progenitor cells, which finally differentiate to mature leukemic cells, which are responsible for leukemic cancer. As observed earlier, the CML typically progresses in three phases, namely, chronic phase, accelerated phase and blast phase. The chronic phase is the initial stage of CML, where the myeloblasts capture up to $10\%$ of blood cells or bone marrow, and these increase to $10\%-19\%$ in the accelerated phase. The blast phase (acute phase or blast crisis) is the severe stage where the blood cells or bone marrow contain more than $20\%$ myeloblasts. Using the theory of multi-type branching processes, we analyze the three stages of CML progression based on three different types of leukemic cells, namely, stem cells, progenitor cells and mature leukemic cells. By calculating the percentage of total leukemic cells in blood, this study also helps in estimating the time when CML progresses from one phase to another (chronic, accelerated or blast).

We categorize the cells produced in the three stages of CML to be denoted as Type-$i$ cells, for $i=0,1,2$. Let us assume that a cell of these three types, generate new cells, by following the Poisson distribution, with parameter $\lambda$, which are then called the offsprings of the parent cell. The offspring mean denoted by $\lambda$ is the same for all types of cells. Further, assume that each Type-$i$ cell ($i=0,1$) generates both Type-$i$ cells (through cell division) and Type-($i+1$) cells (through cell mutation), but a Type-$2$ cell generates only Type-$2$ cells (through cell division). Let $p$ be the probability of the birth of the new cell through mutation, and hence $(1-p)$ is the probability of cell division for Type-$i$ cells, $i=0,1$. Further, the assumption is that each of the three types of cells either generates new cells (through cell division or cell mutation) with probability $a$ after time $\Delta t$, or dies within the interval $\Delta t$ with probability $(1-a)$. Thus $a$ is the survival probability for each cell. When a cell generates progenies, it bursts and hence after generating some new cells, it dies. We take $\Delta t$ to be a unit time in which this happens. Let $Z_{i}(t)$ be the number of Type-$i$ cells at time $t$. Suppose that the development of CML can be traced back to the initial source of $s$ number of Type-$0$ cells. We denote $X_{i}^{(k)}(t)$ and $Y_{i}^{(k)}(t)$ to be the number of cells produced by the cell division and cell mutation, respectively, from  the $k$-th individual of Type-$i$ cells, at time $t$. Accordingly, we obtain,
\begin{eqnarray*}
Z_{0}(0)&=&s,\\
Z_{0}(t+1)&=&\sum\limits_{k=1}^{Z_{0}(t)}X_{0}^{(k)}(t),\\
Z_{1}(t+1)&=&\sum\limits_{k=1}^{Z_{0}(t)}Y_{0}^{(k)}(t)+\sum\limits_{k=1}^{Z_{1}(t)}X_{1}^{(k)}(t),\\
Z_{2}(t+1)&=&\sum\limits_{k=1}^{Z_{1}(t)}Y_{1}^{(k)}(t)+\sum\limits_{k=1}^{Z_{2}(t)} X_{2}^{(k)}(t).
\label{EQ_01}
\end{eqnarray*}
We assume that each individual of Type-$0$ cells gives birth to a random number of Type-$1$ cells denoted by $X$. Hence $X$ is a random variable which follows the Poisson distribution, with parameter $\lambda$, and this is called offspring distribution, with $\lambda$ being the offspring mean. Suppose $X_{i}^{(k)}$ and $Y_{i}^{(k)}$ are independent identically distributed (i.i.d.) random variable with the same probability distribution as $X$. 
If $k$-th individual of Type-$i$ cells has $U_{i}^{(k)}$ offspring at time $t$, where $U_{i}^{(k)}$'s are i.i.d. random variable, then after $(t+1)$ time, Type-$i$ cells have total of
$\displaystyle{\sum\limits_{k} U_{i}^{(k)}}$ offspring, which is also a Poisson random variable with parameter ($\displaystyle{\lambda\times}$ the number of Type-$i$ cells).
The schematic representation of the model is presented in Figure
\ref{Schematic_Diagram}. 

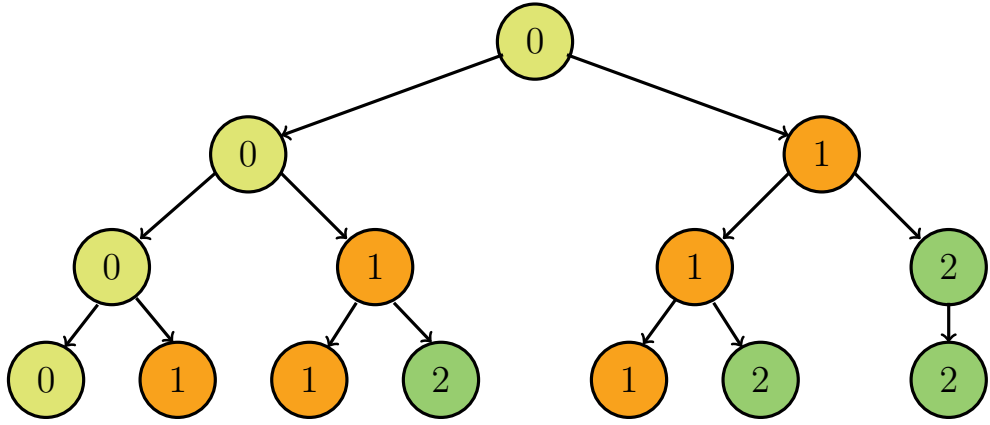
\begin{figure}[h]
\centering{
\begin{tikzpicture}[scale=0.25]
\draw[GreenYellow,fill] (0,0) circle (2cm);
\draw[very thick] (0,0) circle (2cm);
\draw[GreenYellow,fill] (-15.25,-6) circle (2cm);
\draw[very thick] (-15.25,-6) circle (2cm);
\draw[YellowOrange,fill] (15.25,-6) circle (2cm);
\draw[very thick] (15.25,-6) circle (2cm);
\draw[GreenYellow,fill] (-22.5,-12) circle (2cm);
\draw[very thick] (-22.5,-12) circle (2cm);
\draw[YellowOrange,fill] (-8.5,-12) circle (2cm);
\draw[very thick] (-8.5,-12) circle (2cm);
\draw[YellowOrange,fill] (8.5,-12) circle (2cm);
\draw[very thick] (8.5,-12) circle (2cm);
\draw[YellowGreen,fill] (22,-12) circle (2cm);
\draw[very thick] (22,-12) circle (2cm);
\draw[GreenYellow,fill] (-26,-18) circle (2cm);
\draw[very thick] (-26,-18) circle (2cm);
\draw[YellowOrange,fill] (-19,-18) circle (2cm);
\draw[very thick] (-19,-18) circle (2cm);
\draw[YellowOrange,fill] (-12,-18) circle (2cm);
\draw[very thick] (-12,-18) circle (2cm);
\draw[YellowGreen,fill] (-5,-18) circle (2cm);
\draw[very thick] (-5,-18) circle (2cm);
\draw[YellowOrange,fill] (5,-18) circle (2cm);
\draw[very thick] (5,-18) circle (2cm);
\draw[YellowGreen,fill] (12,-18) circle (2cm);
\draw[very thick] (12,-18) circle (2cm);
\draw[YellowGreen,fill] (22,-18) circle (2cm);
\draw[very thick] (22,-18) circle (2cm);
\draw[very thick,->] (-1.7,-0.7) -- (-13.5,-5);
\draw[very thick,->] (1.7,-0.7) -- (13.5,-5);
\draw[very thick,->] (-17,-7) -- (-21,-10.5);
\draw[very thick,->] (-13.5,-7) -- (-10,-10.5);
\draw[very thick,->] (13.5,-7) -- (10,-10.5);
\draw[very thick,->] (17,-7) -- (20.5,-10.5);
\draw[very thick,->] (-23.25,-14) -- (-25,-16.25);
\draw[very thick,->] (-21.2,-13.65) -- (-19.25,-16);
\draw[very thick,->] (-9.5,-13.9) -- (-11,-16.25);
\draw[very thick,->] (-7.5,-13.9) -- (-5.5,-16);
\draw[very thick,->] (7.5,-13.7) -- (5.75,-16.1);
\draw[very thick,->] (9.5,-13.9) -- (11,-16.25);
\draw[very thick,->] (22,-13.9) -- (22,-16);
\draw[ultra thick] (0,-1.25) node[anchor=south]{{\Large $0$}};
\draw[ultra thick] (-15.25,-7.25) node[anchor=south]{{\Large $0$}};
\draw[ultra thick] (15.25,-7.25) node[anchor=south]{{\Large $1$}};
\draw[ultra thick] (-22.5,-13.25) node[anchor=south]{{\Large $0$}};
\draw[ultra thick] (-8.5,-13.25) node[anchor=south]{{\Large $1$}};
\draw[ultra thick] (8.5,-13.25) node[anchor=south]{{\Large $1$}};
\draw[ultra thick] (22,-13.25) node[anchor=south]{{\Large $2$}};
\draw[ultra thick] (-26,-19.25) node[anchor=south]{{\Large $0$}};
\draw[ultra thick] (-19,-19.25) node[anchor=south]{{\Large $1$}};
\draw[ultra thick] (-12,-19.25) node[anchor=south]{{\Large $1$}};
\draw[ultra thick] (-5,-19.25) node[anchor=south]{{\Large $2$}};
\draw[ultra thick] (5,-19.25) node[anchor=south]{{\Large $1$}};
\draw[ultra thick] (12,-19.25) node[anchor=south]{{\Large $2$}};
\draw[ultra thick] ((22,-19.25) node[anchor=south]{{\Large $2$}};
\end{tikzpicture}
}
\caption{Schematic Representation of the Model.}
\label{Schematic_Diagram}
\end{figure}

\section{Model Analysis}
\label{Project_01_Section_Model_Analysis}

In this section,  we carry out the model analysis in detail. We begin with the estimation of the total number of offspring (total progeny) after development of leukemic cancer cells and calculate the time duration of the process. In the discussion, we consider $\lambda a>1$, as the number of CML cells initially increases over time.
\begin{theorem}
\label{Theorem_Total_Cells}
The expected number of total leukemic cells at time $t$ is given by,
\[E\left[Z(t)|Z(0)=s\right]=s\lambda^{t}a^{t},~t\ge 0.\]
\end{theorem}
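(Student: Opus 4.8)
The plan is to collapse the three-type process onto the scalar total $Z(t):=Z_0(t)+Z_1(t)+Z_2(t)$ and recognise it as an ordinary (single-type) Galton--Watson process. The guiding observation is that, as far as the \emph{total} cell count is concerned, it makes no difference whether a newly created cell is born by division or by mutation: in both cases one parent produces one child per offspring event, and the type label is irrelevant to the size of the next generation.

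First I would add the three displayed recursions. Setting $U_i^{(k)}(t):=X_i^{(k)}(t)+Y_i^{(k)}(t)$ for $i=0,1$ and $U_2^{(k)}(t):=X_2^{(k)}(t)$, the mutation term $\sum_k Y_0^{(k)}(t)$ appearing in $Z_1(t+1)$ recombines with the division term $\sum_k X_0^{(k)}(t)$ from $Z_0(t+1)$, and likewise for the Type-$1$ contributions, leaving
\[
Z(t+1)=\sum_{k=1}^{Z_0(t)}U_0^{(k)}(t)+\sum_{k=1}^{Z_1(t)}U_1^{(k)}(t)+\sum_{k=1}^{Z_2(t)}U_2^{(k)}(t),
\]
where $U_i^{(k)}(t)$ is exactly the total number of progeny of the $k$-th Type-$i$ cell present at time $t$. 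By the modelling assumptions each cell, independently, survives the unit time step with probability $a$ and upon surviving bursts into a Poisson$(\lambda)$ number of cells, so every $U_i^{(k)}(t)$ has mean $a\lambda$, and these variables are i.i.d.\ and independent of the population configuration at time $t$.

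Next I would condition on the configuration $(Z_0(t),Z_1(t),Z_2(t))$ at time $t$. Since in each of the three blocks above the number of summands is determined by that configuration while the summands themselves are independent of it with common mean $a\lambda$, the elementary identity for the mean of a random sum (Wald's identity) gives
\[
E\!\left[Z(t+1)\,\middle|\,Z_0(t),Z_1(t),Z_2(t)\right]=a\lambda\bigl(Z_0(t)+Z_1(t)+Z_2(t)\bigr)=a\lambda\,Z(t).
\]
Taking expectations and using the tower property reduces the problem to the scalar recursion $E[Z(t+1)]=a\lambda\,E[Z(t)]$ with $E[Z(0)]=s$, whence an immediate induction on $t$ yields $E[Z(t)\mid Z(0)=s]=s(a\lambda)^t=s\lambda^t a^t$.

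The one point that deserves care is the bookkeeping of the survival probability: one must verify that the effective expected number of children of a single cell is $a\lambda$ and not $\lambda$ — this is precisely the source of the factor $a^t$ rather than merely $\lambda^t$ — and that passing the expectation through the random sums is legitimate, which follows from Wald's identity because the offspring laws have finite mean and the summation limits are measurable with respect to the past. Everything else is routine, since the type structure plays no role in the aggregate dynamics.
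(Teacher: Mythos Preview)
Your proof is correct and follows essentially the same idea as the paper: since every cell, regardless of type, produces on average $a\lambda$ offspring per unit time step, the total population obeys $E[Z(t+1)]=a\lambda\,E[Z(t)]$, and the formula follows by iteration from $E[Z(0)]=s$. The paper's argument is a terse heuristic version of exactly this observation; your explicit collapse of the three recursions, together with Wald's identity and the tower property, simply makes the same reasoning rigorous.
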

\begin{proof}
We begin with $Z_{0}(0)=s$ cells. Now, recall that the probability of survival and the offspring mean for each cell, in a unit interval of time, are given by $a$ and $\lambda$, respectively. Hence, each of the $s$ cells produces $\lambda a$ offsprings, on an average, after a unit interval of time. Therefore, at time $t=1$, the expected number of CML cells (both Type-$0$ and Type-$1$ cells) becomes $s\lambda a$. Using a similar argument, at time $t=2$, the expected number of CML cells (Type-$0$, Type-$1$ and Type-$2$ cells) is given by $s\lambda^{2} a^{2}$. Proceeding in the same manner, we can conclude that the expected number of total leukemic cells (all three types of cells) at time $t$ will be $s\lambda^{t}a^{t}$.
\end{proof}	

\begin{theorem}
\label{Theorem_Type_0_Cells}	
The expected number of Type-$0$ cells at time $t$ is given by,
\[E[Z_{0}(t)|Z(0)=s]=s\lambda^{t}a^{t}(1-p)^{t},~t\ge 0.\]
\end{theorem}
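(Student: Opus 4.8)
The plan is to use the structural fact that a Type-$0$ cell produces, among its offspring, only Type-$0$ cells (through division) and Type-$1$ cells (through mutation), while no Type-$0$ cell is ever produced by a Type-$1$ or Type-$2$ cell. Consequently $\{Z_0(t)\}_{t\ge 0}$ is, on its own, a single-type Galton--Watson process, and the statement reduces to computing its one-step offspring mean and iterating the standard mean recursion.

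First I would pin down the offspring mean of this sub-process, i.e.\ the expected number of Type-$0$ cells produced by one Type-$0$ cell in one time step. A Type-$0$ cell survives and reproduces with probability $a$, in which case it produces a Poisson$(\lambda)$ number of offspring; each such offspring is, independently, a division (hence Type-$0$) offspring with probability $1-p$ and a mutation (hence Type-$1$) offspring with probability $p$. By the thinning property of the Poisson distribution, the number $X_0^{(k)}(t)$ of Type-$0$ offspring of a given cell is Poisson$(\lambda(1-p))$ conditionally on survival and $0$ otherwise; in particular $E\bigl[X_0^{(k)}(t)\bigr]=a\lambda(1-p)$, and the $X_0^{(k)}(t)$ over $k$ are i.i.d.\ and independent of $Z_0(t)$.

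Next I would iterate. From $Z_0(t+1)=\sum_{k=1}^{Z_0(t)}X_0^{(k)}(t)$, Wald's identity gives $E\bigl[Z_0(t+1)\mid Z_0(t)\bigr]=a\lambda(1-p)\,Z_0(t)$, hence $E[Z_0(t+1)\mid Z(0)=s]=a\lambda(1-p)\,E[Z_0(t)\mid Z(0)=s]$. Together with the base case $E[Z_0(0)\mid Z(0)=s]=s$, an induction on $t$ yields $E[Z_0(t)\mid Z(0)=s]=s\bigl(a\lambda(1-p)\bigr)^t=s\lambda^t a^t(1-p)^t$. Alternatively one can run the same bookkeeping-style argument as in the proof of Theorem~\ref{Theorem_Total_Cells}: on average each Type-$0$ cell gives rise to $\lambda a(1-p)$ Type-$0$ cells per unit time, so the expected Type-$0$ count is multiplied by $\lambda a(1-p)$ at each step, giving $s\lambda^t a^t(1-p)^t$ after $t$ steps.

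The only real obstacle is the Poisson-thinning step: one must verify that the division/mutation label attached to each new cell is independent of the total offspring count and of the labels of its siblings, which is exactly what makes $X_0^{(k)}(t)$ a genuine i.i.d.\ offspring sequence with mean $a\lambda(1-p)$, rather than something with dependence across cells. Once that is in place, the rest is the routine Galton--Watson mean computation, and the extra factor $(1-p)^t$ relative to Theorem~\ref{Theorem_Total_Cells} is precisely the probability that a Type-$0$ lineage has avoided mutation at each of its $t$ reproduction events.
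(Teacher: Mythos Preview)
Your proof is correct and, at its core, is exactly the paper's approach: the paper simply observes that each Type-$0$ cell produces on average $\lambda a(1-p)$ Type-$0$ offspring per unit time and iterates, which is precisely your ``bookkeeping-style'' alternative. Your Galton--Watson/Wald formulation is just a more rigorous packaging of the same one-step-mean-then-iterate argument.
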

\begin{proof}
We begin with $Z_0(0)=s$ cells. Further, $a$ is the probability of survival and $\lambda$ is the offspring mean for a cell, in a unit interval of time, with $(1-p)$ being the probability of cell division. Hence, each of Type-$0$ cells produces $\lambda a(1-p)$ off-springs on an average, after a unit interval of time. Hence, at time $t=1$, the expected number of Type-$0$ cells comes out to be $s\lambda a(1-p)$. Similarly, at time $t=2$, the expected number of Type-$0$ cells is given by $s\lambda^{2}a^{2}(1-p)^{2}$. Proceeding in the same way, we can conclude that the expected number of Type-$0$ cells at time $t$ will be $s\lambda^{t}a^{t}(1-p)^{t}$.	
\end{proof}	

\begin{theorem}
\label{Theorem_Type_1_Cells}	
The expected number of Type-$1$ cells at time $t$ is given by,
\[E[Z_{1}(t)|Z(0)=s]=ts\lambda^{t}a^{t}p(1-p)^{t-1},~t\ge 1.\]	
\end{theorem}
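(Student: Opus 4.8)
The plan is to turn the defining relation $Z_{1}(t+1)=\sum_{k=1}^{Z_{0}(t)}Y_{0}^{(k)}(t)+\sum_{k=1}^{Z_{1}(t)}X_{1}^{(k)}(t)$ into a first-order linear recurrence for $m_{1}(t):=E[Z_{1}(t)\mid Z(0)=s]$, and then to solve that recurrence using the closed form $m_{0}(t)=s\lambda^{t}a^{t}(1-p)^{t}$ already established in Theorem \ref{Theorem_Type_0_Cells}. First I would record the per-cell means. A single Type-$i$ cell ($i=0,1$) survives a unit interval with probability $a$ and, conditioned on survival, emits a ${\rm Poisson}(\lambda)$ number of offspring, each independently routed to division (a Type-$i$ cell) with probability $1-p$ or to mutation (a Type-$(i+1)$ cell) with probability $p$. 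By Poisson thinning, the number of division offspring and the number of mutation offspring are Poisson with means $\lambda(1-p)$ and $\lambda p$; multiplying by the survival factor gives $E[X_{i}^{(k)}(t)]=\lambda a(1-p)$ and $E[Y_{i}^{(k)}(t)]=\lambda a p$.

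Next, conditioning on $\mathcal F_{t}=\sigma(Z_{0}(t),Z_{1}(t))$ and using the independence of the offspring families $\{X_{1}^{(k)}\}$ and $\{Y_{0}^{(k)}\}$ from $\mathcal F_{t}$ together with Wald's identity for random sums, I obtain
\[
E\bigl[Z_{1}(t+1)\mid \mathcal F_{t}\bigr]=Z_{0}(t)\,E[Y_{0}]+Z_{1}(t)\,E[X_{1}]=\lambda a p\,Z_{0}(t)+\lambda a(1-p)\,Z_{1}(t).
\]
Taking expectations and inserting $m_{0}(t)=s\lambda^{t}a^{t}(1-p)^{t}$ gives the scalar recurrence
\[
m_{1}(t+1)=\lambda a(1-p)\,m_{1}(t)+s\,p\,\lambda^{t+1}a^{t+1}(1-p)^{t},\qquad m_{1}(0)=0 .
\]

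Finally I would solve this by induction on $t$. The base case $t=1$ gives $m_{1}(1)=s\lambda a p=1\cdot s\lambda a p(1-p)^{0}$, matching the claim. Assuming $m_{1}(t)=ts\lambda^{t}a^{t}p(1-p)^{t-1}$ and substituting,
\[
m_{1}(t+1)=\lambda a(1-p)\cdot ts\lambda^{t}a^{t}p(1-p)^{t-1}+s\,p\,\lambda^{t+1}a^{t+1}(1-p)^{t}=(t+1)\,s\,\lambda^{t+1}a^{t+1}p(1-p)^{t},
\]
completing the induction. Equivalently, and more in the spirit of the preceding proofs, one may argue directly: a Type-$1$ cell present at time $t$ descends from one of the $s$ initial Type-$0$ cells through a lineage of $t$ steps in which exactly one step is a mutation and the remaining $t-1$ are divisions; the mutation step can occupy any of the $t$ positions, each lineage contributes $(\lambda a)^{t}$ on average, and the routing probabilities contribute $p(1-p)^{t-1}$, yielding $ts(\lambda a)^{t}p(1-p)^{t-1}$.

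The only delicate point is the justification of the conditional-expectation identity, namely that the random-sum structure together with the stated i.i.d./independence assumptions licenses replacing each inner sum by $(\text{cell count})\times(\text{per-cell mean})$, and that Poisson thinning correctly splits the offspring mean $\lambda a$ into its division part $\lambda a(1-p)$ and mutation part $\lambda a p$; once these are in place the remainder is routine algebra.
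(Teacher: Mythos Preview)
Your proof is correct and follows essentially the same route as the paper: both derive the one-step recurrence $m_{1}(t+1)=\lambda a(1-p)\,m_{1}(t)+\lambda a p\,m_{0}(t)$ from the branching structure and verify the closed form by induction, with your version making the probabilistic justification (Poisson thinning, Wald's identity for random sums) explicit where the paper leaves it informal. The lineage-counting argument you append at the end is a pleasant addition not present in the paper, but the core argument is the same.
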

\begin{proof}
We use the principle of mathematical induction in order to prove the theorem. Initially, at time $t=0$, there are no Type-$1$ cell. At time $t=1$, the expected number of Type-$1$ cells is given by $s\lambda ap$, which comes only from Type-$0$ cells, through the process of cell mutation. Then Type-$1$ cells will be produced through both the processes, namely, the cell mutation from Type-$0$ cells, and the cell division from Type-$1$ cells. Therefore, the number of Type-$1$ cells at time $t=2$ depends on the number of Type-$0$ cells, as well as the number of Type-$1$ cells at $t=1$. At time $t=2$, the number of Type-$1$ cells produced from Type-$0$ cells is $s\lambda^{2}a^{2}p(1-p)$, which is the same as the number of Type-$1$ cells produced from Type-$1$ cells. Therefore, the expected number of Type-$1$ cells at time $t=2$ is equal to $2s\lambda^{2}a^{2}p(1-p)$.

Let us now assume that the theorem holds for $t=m$. Hence, 
\[E[Z_{1}(m)|Z(0)=s]=ms\lambda^{m}a^{m}p(1-p)^{m-1}.\] 
Now, the value of $\displaystyle{E[Z_{1}(m+1)|Z(0)=s]}$ is dependent on $\displaystyle{E[Z_{0}(m)|Z(0)=s]}$ and $\displaystyle{E[Z_{1}(m)|Z(0)=s]}$. Therefore, we have,
\[E[Z_{0}(m)|Z(0)=s]=s\lambda^{m}a^{m}(1-p)^{m}.\] 
The number of Type-$1$ cells produced from $\displaystyle{s\lambda^{m}a^{m}(1-p)^{m}}$ Type-$0$ cells, through the cell mutation is $\displaystyle{s\lambda^{m+1}a^{m+1}p(1-p)^{m}}$. Further, the number of Type-$1$ cells produced from $\displaystyle{ms\lambda^{m}a^{m}p(1-p)^{m-1}}$ Type-$1$ cells through the cell division is\\ 
$\displaystyle{ms\lambda^{m+1}a^{m+1}p(1-p)^{m}}$. Therefore, we finally obtain,  \[E[Z_{1}(m+1)|Z(0)=s]=(m+1)s\lambda^{m+1}a^{m+1}p(1-p)^{m}.\] Hence, by the principle of mathematical induction, the Theorem holds.
\end{proof}	

\begin{theorem}
\label{Theorem_Type_2_Cells}	
The expected number of Type-$2$ cells at time $t$ is given by,
\[E[Z_{2}(t)|Z(0)=s]=s\lambda^{t}a^{t}p^{2}
\left[1+2(1-p)+3(1-p)^{2}+\dots+(t-1)(1-p)^{t-2}\right],
~t\ge 2,\] 
or, 
\[E[Z_{2}(t)|Z(0)=s]= s\lambda^{t}a^{t}\left[1-(1-p)^t-tp(1-p)^{t-1}\right],
~t\ge 2.\]
\end{theorem}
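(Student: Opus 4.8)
The plan is to obtain the closed (second) expression essentially for free from the earlier theorems, and then deduce the series (first) expression from it. Since the total leukemic population at any time is the disjoint union of the three cell types, we have the pathwise identity $Z(t)=Z_{0}(t)+Z_{1}(t)+Z_{2}(t)$, so linearity of expectation gives
\[E[Z_{2}(t)\mid Z(0)=s]=E[Z(t)\mid Z(0)=s]-E[Z_{0}(t)\mid Z(0)=s]-E[Z_{1}(t)\mid Z(0)=s].\]
Inserting the formulas of Theorems \ref{Theorem_Total_Cells}, \ref{Theorem_Type_0_Cells} and \ref{Theorem_Type_1_Cells} yields $E[Z_{2}(t)\mid Z(0)=s]=s\lambda^{t}a^{t}\bigl[1-(1-p)^{t}-tp(1-p)^{t-1}\bigr]$, which is the second stated form.

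To recover the first form, I would verify the elementary identity $1-(1-p)^{t}-tp(1-p)^{t-1}=p^{2}\sum_{j=1}^{t-1}j(1-p)^{j-1}$. The right-hand sum is the classical one $\sum_{j=1}^{n}jx^{j-1}=\frac{1-(n+1)x^{n}+nx^{n+1}}{(1-x)^{2}}$, obtained by differentiating $\sum_{j=0}^{n}x^{j}=\frac{1-x^{n+1}}{1-x}$, evaluated at $x=1-p$ and $n=t-1$ (so that $(1-x)^{2}=p^{2}$); a short rearrangement, $1-t(1-p)^{t-1}+(t-1)(1-p)^{t}=1-(1-p)^{t-1}\bigl[1+(t-1)p\bigr]=1-(1-p)^{t}-tp(1-p)^{t-1}$, closes the gap between the two stated forms.

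A self-contained alternative that does not invoke Theorem \ref{Theorem_Total_Cells} is to work with the mean recursion directly. Conditioning $Z_{2}(t+1)=\sum_{k=1}^{Z_{1}(t)}Y_{1}^{(k)}(t)+\sum_{k=1}^{Z_{2}(t)}X_{2}^{(k)}(t)$ on $(Z_{1}(t),Z_{2}(t))$ and using Wald's identity — each surviving Type-$1$ cell contributes Type-$2$ offspring with mean $\lambda ap$, each surviving Type-$2$ cell with mean $\lambda a$ — gives, with $m_{i}(t):=E[Z_{i}(t)\mid Z(0)=s]$,
\[m_{2}(t+1)=\lambda a\,m_{2}(t)+\lambda ap\,m_{1}(t),\qquad m_{2}(0)=m_{2}(1)=0.\]
Substituting $m_{1}(t)=ts\lambda^{t}a^{t}p(1-p)^{t-1}$ from Theorem \ref{Theorem_Type_1_Cells} and writing $m_{2}(t)=s\lambda^{t}a^{t}b(t)$ reduces this to $b(t+1)=b(t)+tp^{2}(1-p)^{t-1}$ with $b(1)=0$, which telescopes to $b(t)=p^{2}\sum_{j=1}^{t-1}j(1-p)^{j-1}$, i.e.\ the first form, after which the closed form follows as above; one can equally package this as an induction on $t$ mimicking the proof of Theorem \ref{Theorem_Type_1_Cells}.

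I do not anticipate a real obstacle here. The only two points that need care are (i) the justification that $Z=Z_{0}+Z_{1}+Z_{2}$ (or, in the recursion route, the correct per-cell offspring means $\lambda ap$ and $\lambda a$ entering the branching recursion), and (ii) the slightly fiddly but routine algebra showing that the finite sum $1+2(1-p)+3(1-p)^{2}+\dots+(t-1)(1-p)^{t-2}$ collapses to $p^{-2}\bigl[1-(1-p)^{t}-tp(1-p)^{t-1}\bigr]$ — that is, that the two displayed expressions really coincide.
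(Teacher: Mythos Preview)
Your proposal is correct, and its ingredients coincide with the paper's, but you assemble them in the reverse order. The paper first proves the series form by induction on $t$ --- exactly your ``self-contained alternative'' via the recursion $m_{2}(t+1)=\lambda a\,m_{2}(t)+\lambda a p\,m_{1}(t)$, packaged as base cases $t=2,3$ plus an inductive step --- and only afterwards invokes the decomposition $E[Z]=E[Z_0]+E[Z_1]+E[Z_2]$ together with Theorems \ref{Theorem_Total_Cells}--\ref{Theorem_Type_1_Cells} to read off the closed form. Your primary route does the opposite: you get the closed form immediately from the decomposition and then recover the series form by the standard differentiation-of-a-geometric-sum identity. Your way is slightly more economical, since the algebraic identity replaces the induction entirely and makes the equivalence of the two displayed expressions explicit; the paper's order, on the other hand, makes the series form stand on its own without appealing to Theorem \ref{Theorem_Total_Cells}. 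Either route is fine, and your point (ii) about the ``fiddly but routine'' summation is the only place requiring any care.
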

\begin{proof}
We use the principle of mathematical induction to prove the theorem. At $t=0$ and $t=1$, there is no Type-$2$ cell. At $t=2$, the expected number of Type-$2$ cells is given by $\displaystyle{s\lambda^{2}a^{2}p^{2}}$, which comes only from $\displaystyle{s\lambda a(1-p)}$ Type-$1$ cells through the process of cell mutation. Then, the Type-$2$ cells will be produced through both the processes, namely, the cell mutation from Type-$1$ cells, and the cell division from Type-$2$ cells. Therefore, the number of Type-$2$ cells at time $t=3$ depends on the number of Type-$1$ cells as well as Type-$2$ cells at time $t=2$. At time $t=3$, the number of Type-$2$ cells that are produced from $2s\lambda^{2}a^{2}p(1-p)$ Type-$1$ cells is $2s\lambda^{3}a^{3}p^{2}(1-p)$ and those which are produced from $\displaystyle{s\lambda^{2} a^{2}p^{2}}$ Type-$2$ cells is  $\displaystyle{s\lambda^{3}a^{3}p^{2}}$. Therefore, the expected number of Type-$2$ cells at time $t=3$ is equal to $\displaystyle{s\lambda^{2}a^{2}p^{2}+s\lambda^{3} a^{3}p^{2},~\textit{i.e.,}~s\lambda^{3}a^{3}p^{2}[1+2(1-p)]}$.
Let us consider that the theorem holds for $t=m$. Hence, \[E[Z_{2}(m)|Z(0)=s]=s\lambda^{m}a^{m}p^{2}\left[1+2(1-p)+3(1-p)^{2}+\dots+(t-1)(1-p)^{m-2}\right].\] 
Now, the value of $\displaystyle{E[Z_{2}(m+1)|Z(0)=s]}$ is dependent on $\displaystyle{E[Z_{1}(m)|Z(0)=s]}$	and $\displaystyle{E[Z_{2}(m)|Z(0)=s]}$. \\We have $\displaystyle{E[Z_{1}(m)|Z(0)=s]=ms\lambda^{m}a^{m}p(1-p)^{m-1}}$. 
The number of Type-$2$ cells produced from $\displaystyle{ms\lambda^{m}a^{m}p(1-p)^{m-1}}$ Type-$1$ cells through the cell mutation is $\displaystyle{ms\lambda^{m+1}a^{m+1}p^{2}(1-p)^{m-1}}$. Further, the number of Type-$2$ cells produced from $\displaystyle{s\lambda^{m}a^{m}p^{2}\left[1+2(1-p)+3(1-p)^{2}+\dots+(t-1)(1-p)^{m-2}\right]}$ Type-$2$ cells through the cell division is\\ $\displaystyle{s\lambda^{m+1}a^{m+1}p^{2}\left[1+2(1-p)+3(1-p)^{2}+\dots+(t-1)(1-p)^{m-2}\right]}$. 
Therefore, 
\begin{eqnarray*}
E[Z_{1}(m+1)|Z(0)=s]&=&ms\lambda^{m+1}a^{m+1}p^{2}(1-p)^{m-1}\\
&+&s\lambda^{m+1}a^{m+1}p^{2}\left[1+2(1-p)+3(1-p)^{2}+
\dots+(t-1)(1-p)^{m-2}\right]\\
&=&s\lambda^{m+1}a^{m+1}p^{2}
\left[1+2(1-p)+3(1-p)^{2}+\dots+(m-1)(1-p)^{m-2}+m(1-p)^{m-1}\right]
\end{eqnarray*}
Hence, by the principle of mathematical induction, the Theorem holds. Furthermore, we obtained in Theorem \ref{Theorem_Total_Cells} that
$E[Z(t)|Z(0)=s]=E[Z_{0}(t)|Z(0)=s]+E[Z_{1}(t)|Z(0)=s]+E[Z_{2}(t)|Z(0)=s]=s\lambda^{t}a^{t}.$ 
Hence, \[E[Z_{2}(t)|Z(0)=s]=s\lambda^{t}a^{t}-E[Z_{0}(t)|Z(0)=s]-E[Z_{1}(t)|Z(0)=s]=s\lambda^{t}a^{t}\left[1-(1-p)^{t}-tp(1-p)^{t-1}\right],~t\ge 2.\]
\end{proof}

The results presented in Theorems \ref{Theorem_Total_Cells} to \ref{Theorem_Type_2_Cells} are presented in a summarized form in Table \ref{Table_01}.

\begin{table}[h]
\begin{center}
\begin{tabular}{ll}
\hline
$i$	 & Expected number of Type-$i$ cells \\ 
\hline
$i=0$ & $E[Z_{0}(t)|Z(0)=s]=s\lambda^{t}a^{t}(1-p)^{t},~t\ge 0$. \\
\hline
$i=1$ & $E[Z_{1}(t)|Z(0)=s]=ts\lambda^{t}a^{t}p(1-p)^{t-1},~t\ge 1$.\\
\hline
$i=2$ & $E[Z_{2}(t)|Z(0)=s]= s\lambda^{t}a^{t}\left[1-(1-p)^t-tp(1-p)^{t-1}\right],
~t\ge 2$.\\
\hline
\end{tabular}
\caption{\label{Table_01} Summary of Theorems \ref{Theorem_Total_Cells} to \ref{Theorem_Type_2_Cells} .}
\end{center}
\end{table}

\begin{theorem}
\label{Theorem_Expected_Time_T}
The time to reach the leukemia of $M$ cells on average is given by,
\[T=\frac{\ln\left(M/s\right)}{\ln(\lambda a)},~\lambda a>1.\]	
\end{theorem}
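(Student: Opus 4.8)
The plan is to read off the claim directly from Theorem~\ref{Theorem_Total_Cells}, which already supplies a closed form for the expected total cell count. The only work is to invert that expression in the time variable, so this is essentially a one-line algebraic argument rather than a new probabilistic estimate.

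First I would recall that, by Theorem~\ref{Theorem_Total_Cells}, the expected number of total leukemic cells at time $t$, conditioned on the initial source of $s$ Type-$0$ cells, is $E[Z(t)\mid Z(0)=s]=s\lambda^{t}a^{t}=s(\lambda a)^{t}$. I would then declare $T$ to be the (continuous) time at which this expected count first equals the threshold $M$, i.e.\ the solution of $s(\lambda a)^{T}=M$. Since we are operating under the standing assumption $\lambda a>1$, the map $t\mapsto s(\lambda a)^{t}$ is strictly increasing on $[0,\infty)$ with range $[s,\infty)$, so for any $M\ge s$ this equation has a unique solution, which justifies speaking of ``the'' time $T$.

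Next I would solve for $T$: dividing by $s$ gives $(\lambda a)^{T}=M/s$, and taking natural logarithms (legitimate because $M/s>0$ and $\ln(\lambda a)>0$ by $\lambda a>1$) yields $T\ln(\lambda a)=\ln(M/s)$, hence $T=\dfrac{\ln(M/s)}{\ln(\lambda a)}$, as asserted. I would close by remarking that the hypothesis $\lambda a>1$ is exactly what makes the denominator nonzero and positive, so that $T$ is well defined and positive whenever $M>s$.

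The only thing worth flagging, rather than a genuine obstacle, is the interpretive point that $Z(t)$ was defined on integer times while $T$ is treated as a real number; I would handle this simply by taking the expectation formula $s(\lambda a)^{t}$ as the definition of the expected-count curve for all real $t\ge 0$ (equivalently, one may read $T$ as the first integer time at which the expected count reaches $M$, up to the usual ceiling), and otherwise the argument is immediate.
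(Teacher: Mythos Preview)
Your proposal is correct and follows essentially the same route as the paper: invoke Theorem~\ref{Theorem_Total_Cells} to get $s(\lambda a)^{T}=M$ and solve for $T$. The paper's own proof is even terser (it omits your remarks on monotonicity, uniqueness, and the integer-versus-real time point), so your version is if anything a slight elaboration of the same one-line argument.
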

\begin{proof}
If $T$ be the time required to reach $M$ number of leukemic cells on average, then we can write, from Theorem \ref{Theorem_Total_Cells},
\[s\lambda^{T}a^{T}=M,\]
which gives,
\[T=\frac{\ln\left(M/s\right)}{\ln(\lambda a)},
~\lambda a>1.\]	
\end{proof}	

\begin{theorem}
\label{Theorem_Expected_Time_T12}
If $T_{12}$ be the time required to reach injurious leukemic cells of average size $M_{12}$, the following relation holds,
\[s\lambda^{T_{12}}a^{T_{12}}\left[1-(1-p)^{T_{12}}\right]=M_{12}.\]
\end{theorem}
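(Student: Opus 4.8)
The plan is to recognize that the \emph{injurious leukemic cells} are exactly the Type-$1$ (progenitor) and Type-$2$ (mature leukemic) cells taken together --- the stem cells being the non-injurious precursors --- so that the subscript ``$12$'' in $T_{12}$ and $M_{12}$ records precisely this union. With this identification, I would define $E[Z_{12}(t)\mid Z(0)=s] := E[Z_{1}(t)\mid Z(0)=s] + E[Z_{2}(t)\mid Z(0)=s]$ and obtain a closed form for it from the results already in hand, then impose the defining condition for $T_{12}$.

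First I would invoke Theorem \ref{Theorem_Total_Cells} in its additive form, $E[Z(t)\mid Z(0)=s] = E[Z_{0}(t)\mid Z(0)=s] + E[Z_{1}(t)\mid Z(0)=s] + E[Z_{2}(t)\mid Z(0)=s] = s\lambda^{t}a^{t}$, which immediately gives
\[
E[Z_{12}(t)\mid Z(0)=s] = s\lambda^{t}a^{t} - E[Z_{0}(t)\mid Z(0)=s].
\]
Substituting the Type-$0$ expression from Theorem \ref{Theorem_Type_0_Cells}, namely $E[Z_{0}(t)\mid Z(0)=s] = s\lambda^{t}a^{t}(1-p)^{t}$, yields
\[
E[Z_{12}(t)\mid Z(0)=s] = s\lambda^{t}a^{t}\left[1-(1-p)^{t}\right].
\]
As an internal check, the same formula follows by adding $E[Z_{1}(t)\mid Z(0)=s] = ts\lambda^{t}a^{t}p(1-p)^{t-1}$ from Theorem \ref{Theorem_Type_1_Cells} to $E[Z_{2}(t)\mid Z(0)=s] = s\lambda^{t}a^{t}[1-(1-p)^{t}-tp(1-p)^{t-1}]$ from Theorem \ref{Theorem_Type_2_Cells}, the $tp(1-p)^{t-1}$ contributions cancelling.

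Finally, since $T_{12}$ is by definition the time at which this expected count equals the prescribed average size $M_{12}$, setting $E[Z_{12}(T_{12})\mid Z(0)=s] = M_{12}$ in the closed form above produces the stated relation $s\lambda^{T_{12}}a^{T_{12}}[1-(1-p)^{T_{12}}] = M_{12}$. There is no genuine obstacle: the result is a direct corollary of Theorems \ref{Theorem_Total_Cells}--\ref{Theorem_Type_2_Cells}. The only points deserving a word of care are (i) making explicit that ``injurious leukemic cells'' denotes the combined progenitor-plus-mature population rather than, say, the mature cells alone, and (ii) observing that --- unlike Theorem \ref{Theorem_Expected_Time_T} --- the resulting equation is transcendental in $T_{12}$ and cannot be inverted in closed form, so it is stated as an implicit relation; one may add that, because $t \mapsto s\lambda^{t}a^{t}[1-(1-p)^{t}]$ is continuous, vanishes at $t=0$, tends to $\infty$ as $t\to\infty$, and is strictly increasing when $\lambda a > 1$, a unique $T_{12}$ exists for every admissible $M_{12} > 0$.
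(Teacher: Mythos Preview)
Your proposal is correct and essentially matches the paper's own argument. The paper obtains $E[Z_{1}(T_{12})]+E[Z_{2}(T_{12})]$ by directly summing the expressions from Theorems \ref{Theorem_Type_1_Cells} and \ref{Theorem_Type_2_Cells} (exactly what you call your ``internal check''), whereas your primary route subtracts the Type-$0$ count from the total via Theorems \ref{Theorem_Total_Cells} and \ref{Theorem_Type_0_Cells}; both are one-line computations leading to the same closed form, and your added remarks on the transcendental nature of the equation and the existence/uniqueness of $T_{12}$ go slightly beyond what the paper states.
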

\begin{proof}
Let $T_{12}$ be the time required to reach $M_{12}$ number of injurious leukemic cells. Then, from Theorem \ref{Theorem_Type_1_Cells} and Theorem \ref{Theorem_Type_2_Cells}, we arrive at the following,
\begin{eqnarray*}
M_{12}&=&T_{12}s\lambda^{T_{12}}a^{T_{12}}p(1-p)^{T_{12}-1}+s\lambda^{T_{12}}a^{T_{12}}\left[1-(1-p)^{T_{12}}-T_{12}p(1-p)^{T_{12}-1}\right] \\
&=&s\lambda^{T_{12}}a^{T_{12}}\left[1-(1-p)^{T_{12}}\right].
\end{eqnarray*}  
\end{proof} 

Finally, we determine the probability of CML extinction, under the model considerations. The event $\{Z(t)=0,~\text{for some}~t> 0 ~|~Z(0)=s\}$ is called the disease extinction. Then we have the following Theorem.

\begin{theorem}
If $P(t)$ be the probability of extinction of CML at time $t$, then,
\[P(t)\ge(1-a)^{s\lambda^{t-1} a^{t-1}},~t\ge1.\]
\end{theorem}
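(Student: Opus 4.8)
The plan is to bound the extinction probability $P(t)=P\bigl(Z(t)=0\mid Z(0)=s\bigr)$ from below by the probability of one specific ``mass death'' event at time $t$, and then to push the expectation inside the exponential via convexity. First I would condition on the population one step earlier. Since the state $Z(\cdot)=0$ is absorbing, the event $\{Z(t)=0\}$ is the event that the disease has gone extinct by time $t$, and by the tower property,
\[
P(t)=E\bigl[\,P\bigl(Z(t)=0\mid Z(t-1)\bigr)\,\bigr].
\]

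Next I would give a pointwise lower bound on the inner conditional probability. If the total number of living cells of all three types at time $t-1$ equals $n$, then one way to guarantee $Z(t)=0$ is for every one of those $n$ cells to die during the ensuing unit interval; since the cells act independently and each dies with probability $1-a$, this gives
\[
P\bigl(Z(t)=0\mid Z(t-1)=n\bigr)\ \ge\ (1-a)^{n}
\]
(with strict inequality in general, e.g. because a surviving cell may still generate no progeny). Combining with the previous display yields $P(t)\ge E\bigl[(1-a)^{Z(t-1)}\bigr]$.

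Finally I would apply Jensen's inequality. The function $x\mapsto (1-a)^{x}=e^{x\ln(1-a)}$ is convex on $[0,\infty)$ since $\ln(1-a)<0$ for $0<a<1$, so $E\bigl[(1-a)^{Z(t-1)}\bigr]\ge (1-a)^{\,E[Z(t-1)\mid Z(0)=s]}$. By Theorem \ref{Theorem_Total_Cells} applied at time $t-1$, $E[Z(t-1)\mid Z(0)=s]=s\lambda^{t-1}a^{t-1}$, and substituting gives $P(t)\ge (1-a)^{s\lambda^{t-1}a^{t-1}}$ for every $t\ge 1$; the case $t=1$ is just the statement that all $s$ founder cells die in the first interval, of probability $(1-a)^{s}=(1-a)^{s\lambda^{0}a^{0}}$.

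The step I expect to require the most care is the pointwise conditional bound: one must check that ``every cell currently alive dies in the next step'' really is contained in $\{Z(t)=0\}$ regardless of how the population is distributed across the three types, and that the per-cell death events are independent in the model, so the conditional probability factorizes cleanly as $(1-a)^{n}$. The remaining ingredients — the tower property and the convexity estimate — are routine, and the looseness introduced at both inequality steps is precisely why the conclusion is stated only as a lower bound.
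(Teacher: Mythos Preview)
Your argument is correct and follows the same underlying idea as the paper's proof: bound $P(t)$ below by the probability that every cell alive at time $t-1$ dies in the next step, and then replace the random count $Z(t-1)$ in the exponent by its expectation $s\lambda^{t-1}a^{t-1}$ from Theorem~\ref{Theorem_Total_Cells}. The paper's proof simply asserts this replacement without justification, whereas you supply the missing step via Jensen's inequality for the convex map $x\mapsto(1-a)^{x}$, so your version is in fact a cleaner and more rigorous execution of the paper's own approach.
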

\begin{proof}
According to Theorem \ref{Theorem_Total_Cells}, the expected number of total leukemic cells at time $t$ is $s\lambda^{t}a^{t}$. The disease CML becomes extinct at time $t$ if all leukemic cells produced till time $t-1$ die before $t-1$ or during $t-1$ to $t$. Hence, the probability of extinction of CML at time $t$ is given by,
\[P(t)\ge(1-a)^{s\lambda^{t-1} a^{t-1}},~t\ge1.\]
\end{proof} 

We now discuss about the consequences of the primary therapy on CML patients. The primary therapy for CML prevents the expansion of Type-0 cells, but it does not have a direct effect on Type-1 and Type-2 cells, since these cells are resistant to primary therapy. The effect on Type-0 cells, due to primary therapy results in changes in the development of Type-1 and Type-2 cells. Accordingly, we let the effect of the primary therapy be manifested through change in the probability of cell survival and cell mutation for Type-0 cells to $\bar{a}$ and $\bar{p}$, respectively, whereas those for Type-1 and Type-2 cells remains the same as prior to therapy. Then the results can be written in the following Theorems.
\begin{theorem}
\label{Theorem_Total_Cells_New}
If the primary CML therapy is initiated, the expected number of total leukemic cells at time $t$ is given by,	
\[E[\bar{Z}(t)|Z(0)=s]=s\lambda^{t}\bar{a}\left[\bar{p}a^{t-1}+\bar{p}a^{t-2}\{\bar{a}(1-\bar{p})\}+\bar{p}a^{t-3}\{\bar{a}(1-\bar{p})\}^{2}+\cdots+\bar{p}a\{\bar{a}(1-\bar{p})\}^{t-2}+\{\bar{a}(1-\bar{p})\}^{t-1}\right].\]
\end{theorem}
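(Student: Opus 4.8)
The plan is to decompose the population by cell type, compute the Type-$0$ count directly, and then recover the combined Type-$1$ plus Type-$2$ count by a cohort (first-mutation) argument together with linearity of expectation. First I would treat the Type-$0$ lineage in isolation: under the therapy it is an ordinary single-type branching process whose cells leave, on average, $\lambda\bar a(1-\bar p)$ Type-$0$ offspring per unit time, so exactly as in Theorem \ref{Theorem_Type_0_Cells} one obtains
\[E[\bar Z_0(t)\mid Z(0)=s]=s\bigl(\lambda\bar a(1-\bar p)\bigr)^{t}=s\lambda^{t}\bar a^{t}(1-\bar p)^{t}.\]

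The key structural observation is that Type-$1$ and Type-$2$ cells share the same survival probability $a$ and the same offspring mean $\lambda$, so the combined family $\bar Z_1+\bar Z_2$ is insensitive to how mutation shuffles mass between the two types: a single Type-$1$ cell created at time $\tau$ has, on average, $(\lambda a)^{t-\tau}$ descendants of types $1$ and $2$ present at time $t\ge\tau$. Next I would count how many fresh Type-$1$ cells are created by Type-$0$ mutation at each step: over the step $j\to j+1$ each of the $E[\bar Z_0(j)\mid Z(0)=s]$ Type-$0$ cells produces on average $\lambda\bar a\bar p$ Type-$1$ cells, so the expected number of Type-$1$ cells newly born at time $j+1$ is $s\bigl(\lambda\bar a(1-\bar p)\bigr)^{j}\lambda\bar a\bar p$. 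Propagating each such cohort forward by the factor $(\lambda a)^{t-1-j}$ and summing over $j=0,1,\dots,t-1$ gives, by linearity of expectation,
\[E[\bar Z_1(t)+\bar Z_2(t)\mid Z(0)=s]=\sum_{j=0}^{t-1}s\bigl(\lambda\bar a(1-\bar p)\bigr)^{j}\,\lambda\bar a\bar p\,(\lambda a)^{t-1-j}.\]

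Finally I would add $E[\bar Z_0(t)\mid Z(0)=s]$ to this sum, factor out $s\lambda^{t}\bar a$, and tidy up. Writing $E[\bar Z_0(t)\mid Z(0)=s]=s\lambda^{t}\bar a\cdot(1-\bar p)\bigl(\bar a(1-\bar p)\bigr)^{t-1}$ and noting that the $j=t-1$ term of the sum equals $s\lambda^{t}\bar a\cdot\bar p\bigl(\bar a(1-\bar p)\bigr)^{t-1}$, the two merge into $s\lambda^{t}\bar a\bigl(\bar a(1-\bar p)\bigr)^{t-1}$, while the terms $j=0,\dots,t-2$ of the sum contribute precisely $s\lambda^{t}\bar a\bigl[\bar p a^{t-1}+\bar p a^{t-2}\{\bar a(1-\bar p)\}+\cdots+\bar p a\{\bar a(1-\bar p)\}^{t-2}\bigr]$, and together these reproduce the bracketed expression in the statement. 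An alternative route is to observe the one-step identity $E[\bar Z(t+1)\mid\cdot]=\lambda a\,E[\bar Z(t)\mid\cdot]+\lambda(\bar a-a)\,E[\bar Z_0(t)\mid\cdot]$ and solve this first-order linear recursion with the known $E[\bar Z_0(t)\mid\cdot]$, or simply to run an induction on $t$ in the style of Theorems \ref{Theorem_Type_1_Cells}--\ref{Theorem_Type_2_Cells}. I do not expect any genuine difficulty here; the only thing needing care is the bookkeeping --- fixing the cohort index ranges (a Type-$1$ cell born at time $t$ contributes the empty product $(\lambda a)^{0}=1$) and correctly merging the pure-Type-$0$ term with the terminal term of the mutation sum so that the answer lands on the exact closed form displayed, the rest being routine geometric-series manipulation.
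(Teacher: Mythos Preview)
Your argument is correct. The key structural ingredient---lumping Type-$1$ and Type-$2$ cells together as a population with per-step growth factor $\lambda a$, fed by mutation from a Type-$0$ population with per-step growth factor $\lambda\bar a(1-\bar p)$---is exactly what the paper uses as well. The difference is one of organization: the paper proceeds by explicitly computing $E[\bar Z(t)\mid Z(0)=s]$ for $t=1,2,3$ (tracking at each step the three contributions ``Type-$0$ from Type-$0$'', ``Type-$1$ from Type-$0$'', and ``Type-$1$+$2$ from Type-$1$+$2$'') and then appeals to the evident pattern, whereas you unfold that same recursion in one stroke via the first-mutation cohort decomposition and sum directly. Your route is arguably cleaner, since the cohort sum lands immediately on the closed form without having to spot a pattern, and your merging of the $j=t-1$ term with $E[\bar Z_0(t)]$ to produce the final $\{\bar a(1-\bar p)\}^{t-1}$ summand is exactly the right bookkeeping. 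The one-step recursion and induction you mention as alternatives are essentially what the paper's step-by-step computation amounts to.
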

\begin{proof}
In order to prove this Theorem, we consider the following three facts:	
\begin{itemize}
\item Type-0 cells produce the same cells at a rate $\lambda$, with probability $1-\bar{p}$ (cell division), where the survival probability of Type-0 cells is $\bar{a}$.
\item Type-0 cells produce Type-1 cells at a rate $\lambda$, with probability $\bar{p}$ (cell mutation), where the survival probability of Type-0 cells is $\bar{a}$. 
\item Both Type-1 and Type-2 cells produce new leukemic cells of any type at a rate $\lambda$, with probability $1$ (cell division and mutation together), where the survival probability of Type-1 or Type-2 cells is $a$.
\end{itemize}
Therefore, the number of total leukemic cells at $t=n$ is dependent on the following three evaluations:
\begin{itemize}
\item Number of Type-0 cells produced from Type-0 cells, present at $t=n-1$, through cell division.
\item Number of Type-1 cells produced from Type-0 cells, present at $t=n-1$, through cell mutation. 
\item Total number of Type-1 and Type-2 cells produced from both of Type-1 and Type-2 cells, present at $t=n-1$, through cell division and mutation together.
\end{itemize}
\begin{enumerate}[(A)]
\item Calculation of total leukemic cells at $t=1$, 
\begin{itemize}
\item The number of Type-0 cells produced from Type-0 cells, present at $t=0$, through cell division is $s\lambda \bar{a}(1-\bar{p})$.
\item The number of Type-1 cells produced from Type-0 cells, present at $t=0$, through cell mutation is	$s\lambda\bar{a}\bar{p}$. 
\item There is no such type of leukemic cells which is produced from Type-1 or Type-2 cells, present at $t=0$. 
\end{itemize}
Therefore, at $t=1$, the number of total leukemic cells becomes $s\lambda \bar{a}(1-\bar{p})+s\lambda\bar{a}\bar{p}=s\lambda\bar{a}$.
\item Calculation of total leukemic cells at $t=2$, 
\begin{itemize}
\item The number of Type-0 cells produced from Type-0 cells, present at $t=1$, through cell division is $s\lambda^{2} \bar{a}^{2}(1-\bar{p})^{2}$.
\item The number of Type-1 cells produced from Type-0 cells, present at $t=1$, through cell mutation is	$s\lambda^{2}\bar{a}^{2}\bar{p}(1-\bar{p})$. 
\item Total number of Type-1 and Type-2 cells produced from both of Type-1 and Type-2 cells, present at $t=1$, through cell division and mutation together is $s\lambda^{2}a\bar{a}\bar{p}$. 
\end{itemize}
Therefore, at $t=2$, the number of total leukemic cells is,
\[s\lambda^{2}\bar{a}^{2}(1-\bar{p})^{2}+s\lambda^{2}\bar{a}^{2}\bar{p}(1-\bar{p})+s\lambda^{2}a\bar{a}\bar{p}
=s\lambda^{2}\bar{a}\left[\bar{p}a+\{\bar{a}(1-\bar{p})\}\right].\]
\item Calculation of total leukemic cells at $t=3$, 
\begin{itemize}
\item The number of Type-0 cells produced from Type-0 cells, present at $t=2$, through cell division is $s\lambda^{3}\bar{a}^{3}(1-\bar{p})^{3}$.
\item The number of Type-1 cells produced from Type-0 cells, present at $t=2$, through cell mutation is $s\lambda^{3}\bar{a}^{3}\bar{p}(1-\bar{p})^{2}$. 
\item Total number of Type-1 and Type-2 cells produced from both of Type-1 and Type-2 cells, present at $t=2$, through cell division and mutation together is $s\lambda^{3}a\bar{a}^{2}\bar{p}(1-\bar{p})+s\lambda^{3}a^{2}\bar{a}\bar{p}$. 
\end{itemize}
Therefore, at $t=3$, the number of total leukemic cells is, 
\begin{eqnarray*}
&&s\lambda^{3}\bar{a}^{3}(1-\bar{p})^{3}+s\lambda^{3}\bar{a}^{3}\bar{p}(1-\bar{p})^{2}+s\lambda^{3}a\bar{a}^{2}\bar{p}(1-\bar{p})+s\lambda^{3}a^{2}\bar{a}\bar{p}\\
&=&s\lambda^{3}\bar{a}\Big[\bar{p}a^{2}+\bar{p}a\{\bar{a}(1-\bar{p})\}+\{\bar{a}^{2}(1-\bar{p})^{2}\}\Big].
\end{eqnarray*}
\end{enumerate}
Proceeding in the same manner, one can obtain that the expected number of total leukemic cells at time $t$ is equal to, \[s\lambda^{t}\bar{a}\left[\bar{p}a^{t-1}+\bar{p}a^{t-2}\{\bar{a}(1-\bar{p})\}+\bar{p}a^{t-3}\{\bar{a}(1-\bar{p})\}^{2}+\cdots+\bar{p}a\{\bar{a}(1-\bar{p})\}^{t-2}+\{\bar{a}(1-\bar{p})\}^{t-1}\right].\]
\end{proof}	
\begin{theorem}
\label{Theorem_Type_0_Cells_New}	
If the primary CML therapy is initiated, the expected number of Type-$0$ cells at time $t$ is given by,
\[E[\bar{Z}_{0}(t)|Z(0)=s]=s\lambda^{t}{\bar{a}}^{t}(1-{\bar{p}})^{t}.\]
\end{theorem}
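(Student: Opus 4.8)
The plan is to exploit the fact that, under primary therapy, the Type-$0$ cell population still evolves as a \emph{self-contained} single-type branching process: a Type-$0$ cell produces only Type-$0$ cells (via cell division) and Type-$1$ cells (via mutation), and the latter leave the Type-$0$ compartment, so $\bar{Z}_{0}(t)$ does not depend on $\bar{Z}_{1}(t)$ or $\bar{Z}_{2}(t)$. This puts us in exactly the situation of Theorem \ref{Theorem_Type_0_Cells}, but with the therapy-modified parameters $\bar{a}$ and $\bar{p}$ in place of $a$ and $p$.

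First I would record the one-step mean offspring number for a Type-$0$ cell under therapy. In one unit of time such a cell survives with probability $\bar{a}$ and, conditional on survival, it bursts into a Poisson($\lambda$) number of progeny, each of which is again Type-$0$ with probability $1-\bar{p}$ (cell division) and Type-$1$ with probability $\bar{p}$ (mutation). Hence the expected number of Type-$0$ daughters of a single Type-$0$ cell equals $\lambda\bar{a}(1-\bar{p})$.

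Next I would set up the recursion. Writing $\bar{Z}_{0}(t+1)=\sum_{k=1}^{\bar{Z}_{0}(t)}\bar{X}_{0}^{(k)}(t)$ with the $\bar{X}_{0}^{(k)}(t)$ i.i.d.\ copies of the Type-$0$ offspring variable just described and independent of $\bar{Z}_{0}(t)$, the tower property together with Wald's identity gives
\[E[\bar{Z}_{0}(t+1)\mid Z(0)=s]=\lambda\bar{a}(1-\bar{p})\,E[\bar{Z}_{0}(t)\mid Z(0)=s].\]
Starting from $\bar{Z}_{0}(0)=s$ and iterating (equivalently, by induction on $t$) yields $E[\bar{Z}_{0}(t)\mid Z(0)=s]=s\bigl[\lambda\bar{a}(1-\bar{p})\bigr]^{t}=s\lambda^{t}\bar{a}^{t}(1-\bar{p})^{t}$, which is the claim. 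Alternatively, one can simply mimic the informal step-by-step argument used in the proof of Theorem \ref{Theorem_Type_0_Cells} verbatim, substituting $a\mapsto\bar{a}$ and $p\mapsto\bar{p}$ throughout.

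There is essentially no obstacle here: the only point requiring a line of justification is the decoupling of the Type-$0$ dynamics from the Type-$1$ and Type-$2$ compartments, and this is immediate from the model description, since primary therapy alters only the Type-$0$ parameters and Type-$0$ cells never receive any input from Type-$1$ or Type-$2$ cells. The rest is the routine geometric iteration already carried out in the pre-therapy case.
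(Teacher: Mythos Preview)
Your proposal is correct and aligns with the paper's own proof, which simply states that the argument is identical to that of Theorem~\ref{Theorem_Type_0_Cells} with $a$ and $p$ replaced by $\bar{a}$ and $\bar{p}$. You have supplied more detail than the paper does (the Wald/recursion formulation and the explicit decoupling remark), but the underlying idea is the same geometric iteration of the one-step mean $\lambda\bar{a}(1-\bar{p})$.
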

\begin{proof}
The proof is similar to Theorem \ref{Theorem_Type_0_Cells}.
\end{proof}	

\begin{theorem}
\label{Theorem_Type_1_Cells_New}	
If the primary CML therapy is initiated, the expected number of Type-$1$ cells at time $t$ is given by
\begin{eqnarray*}
E[\bar{Z}_{1}(t)|Z(0)=s]&=&s\lambda^{t}\bar{a}\bar{p}\left[\{a(1-p)\}^{t-1}+\{a(1-p)\}^{t-2}\{\bar{a}(1-\bar{p})\}+\{a(1-p)\}^{t-3}\{\bar{a}(1-\bar{p})\}^{2}+\cdots\right.\\
&&~+~\left.\{a(1-p)\}\{\bar{a}(1-\bar{p})\}^{t-2}+\{\bar{a}(1-\bar{p})\}^{t-1}\right].
\end{eqnarray*}
\end{theorem}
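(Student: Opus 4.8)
The plan is to mirror the inductive structure of the proof of Theorem \ref{Theorem_Type_1_Cells}, while keeping careful track of the fact that, once the primary therapy is initiated, Type-$0$ cells evolve with the post-therapy parameters $\bar{a},\bar{p}$ whereas Type-$1$ cells still evolve with $a,p$. There are exactly two sources feeding the Type-$1$ population: (i) cell mutation from Type-$0$ cells, which in one unit of time turns a single Type-$0$ cell into $\lambda\bar{a}\bar{p}$ new Type-$1$ cells on average; and (ii) cell division within the Type-$1$ population, which multiplies the current expected count of Type-$1$ cells by $\lambda a(1-p)$ per unit of time — the factor here is $(1-p)$, not $1$, because a Type-$1$ cell that survives and bursts contributes Type-$1$ offspring only with probability $1-p$, the remaining fraction $p$ mutating on to Type-$2$. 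This gives the first-order linear recursion
\begin{equation*}
E[\bar{Z}_{1}(t)\mid Z(0)=s]=\lambda a(1-p)\,E[\bar{Z}_{1}(t-1)\mid Z(0)=s]+\lambda\bar{a}\bar{p}\,E[\bar{Z}_{0}(t-1)\mid Z(0)=s],
\end{equation*}
with $E[\bar{Z}_{1}(0)\mid Z(0)=s]=0$, into which one substitutes the closed form $E[\bar{Z}_{0}(t-1)\mid Z(0)=s]=s\lambda^{t-1}\bar{a}^{t-1}(1-\bar{p})^{t-1}$ supplied by Theorem \ref{Theorem_Type_0_Cells_New}.

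The cleanest way to reach the stated formula is then to unroll this recursion by a cohort decomposition rather than to guess-and-verify. Split the Type-$1$ population present at time $t$ according to the time step $j\in\{1,\dots,t\}$ at which each Type-$1$ lineage was first created by a mutation event from a Type-$0$ cell. The cohort born at step $j$ has expected size $\lambda\bar{a}\bar{p}\cdot E[\bar{Z}_{0}(j-1)\mid Z(0)=s]=s\bar{p}\lambda^{j}\bar{a}^{j}(1-\bar{p})^{j-1}$ at the instant of its birth, and thereafter its Type-$1$ descendants multiply by the factor $\lambda a(1-p)$ at each subsequent step, so this cohort contributes $s\bar{p}\lambda^{j}\bar{a}^{j}(1-\bar{p})^{j-1}\bigl(\lambda a(1-p)\bigr)^{t-j}$ to $E[\bar{Z}_{1}(t)\mid Z(0)=s]$. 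Summing over $j$, pulling out the common factor $s\lambda^{t}\bar{a}\bar{p}$, and reindexing by $k=j-1$ yields
\begin{equation*}
E[\bar{Z}_{1}(t)\mid Z(0)=s]=s\lambda^{t}\bar{a}\bar{p}\sum_{k=0}^{t-1}\bigl(a(1-p)\bigr)^{t-1-k}\bigl(\bar{a}(1-\bar{p})\bigr)^{k},
\end{equation*}
whose extreme terms are $\{a(1-p)\}^{t-1}$ (at $k=0$) and $\{\bar{a}(1-\bar{p})\}^{t-1}$ (at $k=t-1$); this is precisely the bracketed expression in the statement. To match the style of the surrounding proofs one may instead phrase this as an induction on $t$: assuming the displayed formula at $t=m$, substitute it together with the Type-$0$ formula into the recursion and observe that multiplying the length-$m$ geometric-type sum by $\lambda a(1-p)$ shifts every exponent of $a(1-p)$ up by one, while adding the single new term $\lambda\bar{a}\bar{p}\,E[\bar{Z}_{0}(m)\mid Z(0)=s]$ appends the $k=m$ summand, producing the length-$(m+1)$ sum.

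The only genuine point that needs care is the correct identification of the two per-step multipliers — namely that a mutation from a Type-$0$ cell carries the post-therapy weight $\lambda\bar{a}\bar{p}$, while the Type-$1$ self-renewal multiplier is $\lambda a(1-p)$ rather than $\lambda a$, since a surviving Type-$1$ cell partitions its offspring between Type-$1$ and Type-$2$. Once those weights are pinned down, everything else is routine: either one solves a scalar linear recurrence with a geometric forcing term, or, equivalently, one sums the cohort contributions; the two presentations produce the same closed form, and as a consistency check setting $\bar{a}=a$ and $\bar{p}=p$ collapses the sum of $t$ equal terms back to the pre-therapy expression $ts\lambda^{t}a^{t}p(1-p)^{t-1}$ of Theorem \ref{Theorem_Type_1_Cells}.
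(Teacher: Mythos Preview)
Your proof is correct and rests on the same two ingredients as the paper's argument: the per-step multiplier $\lambda\bar{a}\bar{p}$ for mutation from Type-$0$ and $\lambda a(1-p)$ for Type-$1$ self-renewal, combined into the linear recursion you display. The paper proceeds by computing the cases $t=1,2,3$ explicitly and then extrapolating the pattern, whereas you write the recursion down and solve it outright via the cohort decomposition (and offer the induction as an alternative); this is the same route executed more formally, and your consistency check against Theorem~\ref{Theorem_Type_1_Cells} is a nice addition.
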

\begin{proof}
In order to prove this theorem, we consider the following two facts:	
\begin{itemize}
\item Type-0 cells produce Type-1 cells at a rate $\lambda$ with probability $\bar{p}$ (cell mutation), where the survival probability of Type-0 cells is $\bar{a}$. 
\item Type-1 cells produce the same cells at a rate $\lambda$ with probability $1-p$ (cell division), where the survival probability of Type-0 cells is $a$.
\end{itemize}
Therefore, the number of Type-1 cells at $t=n$ is dependent on the following two evaluations:
\begin{itemize}
\item Number of Type-1 cells produced from Type-0 cells, present at $t=n-1$, through cell mutation. 
\item Number of Type-1 cells produced from Type-1 cells, present at $t=n-1$, through cell division.
\end{itemize}
\begin{enumerate}[(A)]
\item Calculation of Type-1 cells at $t=1$, 
\begin{itemize}
\item The number of Type-1 cells produced from Type-0 cells, present at $t=0$, through cell mutation is $s\lambda\bar{a}\bar{p}$. 
\item There is no such Type-1 cells which is produced from Type-1 cells present at $t=0$. 
\end{itemize}
Therefore, at $t=1$, the number of Type-1 cells becomes $s\lambda\bar{a}\bar{p}$.
\item Calculation of Type-1 cells at $t=2$, 
\begin{itemize}
\item The number of Type-1 cells, produced from Type-0 cells present at $t=1$, through cell mutation is $s\lambda^{2}\bar{a}^{2}\bar{p}(1-\bar{p})$. 
\item The number of Type-1 cells produced from Type-1 cells, present at $t=1$, through cell division is $s\lambda^{2}a\bar{a}\bar{p}(1-p)$.
\end{itemize}
Therefore, at $t=2$, the number of Type-1 cells becomes $s\lambda^{2}\bar{a}^{2}\bar{p}(1-\bar{p})+s\lambda^{2}a\bar{a}\bar{p}(1-p)=s\lambda^{2}\bar{a}\bar{p}\Big[a(1-p)+\bar{a}(1-\bar{p})\Big]$.
\item Calculation of Type-1 cells at $t=3$, 
\begin{itemize}
\item The number of Type-1 cells produced from Type-0 cells, present at $t=2$, through cell mutation is	$s\lambda^{3}\bar{a}^{3}\bar{p}(1-\bar{p})^{2}$. 
\item The number of Type-1 cells produced from Type-1 cells, present at $t=2$, through cell division is $s\lambda^{3}a\bar{a}^{2}\bar{p}(1-p)(1-\bar{p})+s\lambda^{3}a^{2}\bar{a}\bar{p}(1-p)^{2}$.
\end{itemize}
Therefore, at $t=3$, the number of Type-1 cells is 
\begin{eqnarray*}
&&s\lambda^{3}\bar{a}^{3}\bar{p}(1-\bar{p})^{2}+s\lambda^{3}a\bar{a}^{2}\bar{p}(1-p)(1-\bar{p})+s\lambda^{3}a^{2}\bar{a}\bar{p}(1-p)^{2}\\
&=&s\lambda^{3}\bar{a}\bar{p}\Big[\{a(1-p)\}^{2}+\{a(1-p)\}\{\bar{a}(1-\bar{p})\}+\{\bar{a}(1-\bar{p})\}^{2}\Big].
\end{eqnarray*}
\end{enumerate}	
Proceeding in the same manner, one can obtain that the expected number of Type-1 cells at time $t$ equals, \[s\lambda^{t}\bar{a}\bar{p}\Big[\{a(1-p)\}^{t-1}+\{a(1-p)\}^{t-2}\{\bar{a}(1-\bar{p})\}+\{a(1-p)\}^{t-3}\{\bar{a}(1-\bar{p})\}^{2}+\cdots+\{a(1-p)\}\{\bar{a}(1-\bar{p})\}^{t-2}+\{\bar{a}(1-\bar{p})\}^{t-1}\Big].\]
\end{proof}	
\begin{theorem}
\label{Theorem_Type_2_Cells_new}	
If the primary CML therapy is initiated, the expected number of Type-$2$ cells at time $t$ can be calculated using the following relation,
\[E[\bar{Z}_{2}(t)|Z(0)=s]=E[\bar{Z}(t)|Z(0)=s]-E[\bar{Z}_{0}(t)|Z(0)=s]-E[\bar{Z}_{1}(t)|Z(0)=s].\]
\end{theorem}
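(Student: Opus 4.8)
The plan is to exploit the same decomposition that underlies the closing step in the proof of Theorem~\ref{Theorem_Type_2_Cells}: at any time $t$ every surviving leukemic cell is, by construction of the branching process, of exactly one of the three types, so the three type-populations partition the total. Concretely, I would first record the pathwise identity $\bar{Z}(t) = \bar{Z}_0(t) + \bar{Z}_1(t) + \bar{Z}_2(t)$, valid for every sample path and every $t \ge 0$. This holds because the primary therapy only rescales the survival and mutation probabilities of Type-$0$ cells (to $\bar a$ and $\bar p$) and does not create a fourth cell type or any mechanism by which a living cell ceases to belong to $\{0,1,2\}$; the rules for Type-$1$ and Type-$2$ cells are unchanged. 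Taking the conditional expectation given $Z(0)=s$ of both sides and invoking linearity of expectation then yields the asserted relation $E[\bar{Z}_{2}(t)|Z(0)=s]=E[\bar{Z}(t)|Z(0)=s]-E[\bar{Z}_{0}(t)|Z(0)=s]-E[\bar{Z}_{1}(t)|Z(0)=s]$ directly.

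At that point one may, if an explicit closed form is wanted, substitute the three known quantities: $E[\bar{Z}(t)|Z(0)=s]$ from Theorem~\ref{Theorem_Total_Cells_New}, $E[\bar{Z}_0(t)|Z(0)=s]=s\lambda^{t}\bar{a}^{t}(1-\bar{p})^{t}$ from Theorem~\ref{Theorem_Type_0_Cells_New}, and $E[\bar{Z}_1(t)|Z(0)=s]$ from Theorem~\ref{Theorem_Type_1_Cells_New}, and then collect the resulting geometric-type sums. This is the exact analogue of the device used at the end of the proof of Theorem~\ref{Theorem_Type_2_Cells}, where $E[Z_2(t)]$ was recovered from $E[Z(t)] - E[Z_0(t)] - E[Z_1(t)]$; here the bookkeeping is slightly messier only because $\bar a \neq a$ and $\bar p \neq p$ prevent the telescoping simplification available in the no-therapy case.

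There is essentially no analytical obstacle; the one point deserving a word of care is consistency of definitions — one must confirm that $\bar{Z}(t)$ as introduced in Theorem~\ref{Theorem_Total_Cells_New} is genuinely the sum of all three type-counts (and not, say, only the "injurious" Type-$1$ and Type-$2$ cells), so that the subtraction is legitimate; given the model set-up in Section~\ref{Project_01_Section_Model} this is immediate. Since the statement asserts only the subtractive relation, no further simplification of the right-hand side is strictly required, so the proof can be kept to the two lines above.
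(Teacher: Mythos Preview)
Your proposal is correct and matches the paper's own proof, which is a one-line remark that the result follows from Theorems~\ref{Theorem_Total_Cells_New}, \ref{Theorem_Type_0_Cells_New} and \ref{Theorem_Type_1_Cells_New}. Your explicit invocation of the pathwise partition $\bar{Z}(t)=\bar{Z}_0(t)+\bar{Z}_1(t)+\bar{Z}_2(t)$ and linearity of expectation simply spells out what the paper leaves implicit.
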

\begin{proof}
The proof follows from Theorem \ref{Theorem_Total_Cells_New}, Theorem \ref{Theorem_Type_0_Cells_New} and Theorem \ref{Theorem_Type_1_Cells_New}.
\end{proof}	

\section{Results}
\label{Project_01_Section_Results}

In this penultimate Section, we present various illustrative numerical results, pertaining to the derivations made in Section \ref{Project_01_Section_Model_Analysis}. The levels of the three types of CML cells, as well as the total number of cells, as they progress over time are numerically illustrated. We also demonstrate how the number of CML cells can be impacted by the values of the parameters, such as the initial levels of the leukemic stem cells, the probability of survival of the CML cells, cell division or mutation rates, and the offspring mean. The simulation is carried out in order to observe the probability of extinction of the disease, with respect to the variation in the values of the parameters. Finally, we determine the optimal time, at which the progression of the disease changes from one state to another. For the purpose of some illustrative results, we choose the values of the parameters to be, $s=1$, $p=0.3$, $a=0.5$, $\lambda=3$, with the time window under consideration being $T=50$.

We begin with the results for the progression of the cells over time. Accordingly, Figure \ref{Fig_01} shows the changes in the levels of the three types of CML cells over time. We observe from the figure, that from the initial time of $t=0$, until the time of $t=40$, the leukemic stem cells and the progenitor cells, gradually increase with time, but there is no noticeable increment in the levels of the mature leukemic cells. However, beyond this point, there is a rapid and accelerated growth of the mature leukemic cells, thereby resulting in the host being predominantly invaded by these cells.

\begin{figure}[H]
\centering
\includegraphics[width=0.5\linewidth]{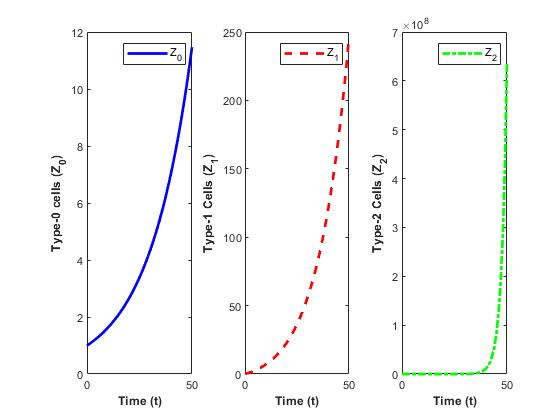}
\caption{Progression of the leukemic stem cells ($Z_{0}$), the progenitor cells ($Z_{1}$) and the mature leukemic cells ($Z_{2}$)}
\label{Fig_01}
\end{figure}

In Figure \ref{Fig_02}, we plot the percentage of injurious cells, defined as $\displaystyle{\frac{Z_{1}+Z_{2}}{Z_{0}+Z_{1}+Z_{2}}}\times 100$, as a function of time. It captures the progression of percentage of the progenitor cells and the mature leukemic cells, as a percentage of the total number of the leukemic cells, over time. As can be seen from Figure \ref{Fig_02}, the injurious cells, for all practical purpose have taken over the host after about time $t=12$, since the detection.

\begin{figure}[H]
\centering
\includegraphics[width=0.5\linewidth]{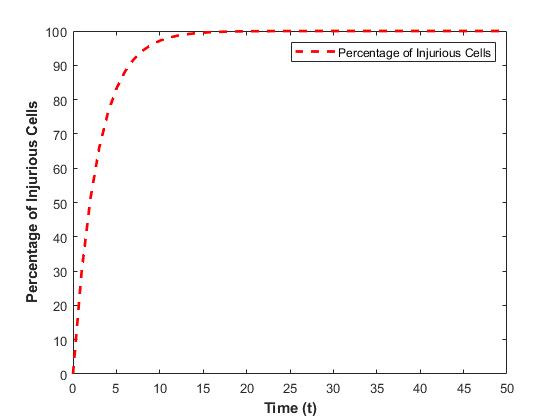}
\caption{Percentage of injurious cells as a function of $t$}
\label{Fig_02}
\end{figure}

Now, we present the three dimensional results for the progression of the three types of cells, against the ranges of $a\in [0,0.5]$ and $\lambda\in [0,3]$ in Figure \ref{Fig_03_04_05}. It can be observed that until one reaches fairly close to the maximum values in the ranges of $a$ and $\lambda$, the progression of the disease is reasonably under control, beyond which the cell levels dramatically reach fatal levels in the host.

\begin{figure}[H]
\begin{subfigure}{.3\textwidth}
\centering
\includegraphics[width=1\linewidth]{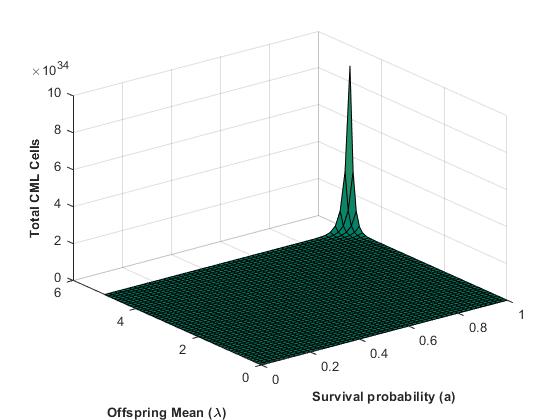}
\caption{$Z_{0}$ as a function of $a$ and $\lambda$}
\label{Fig_03}
\end{subfigure}
\begin{subfigure}{.3\textwidth}
\centering
\includegraphics[width=1\linewidth]{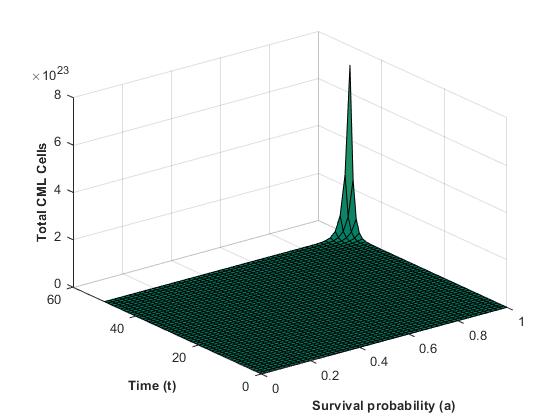}
\caption{$Z_{1}$ as a function of $a$ and $\lambda$}
\label{Fig_04}
\end{subfigure}
\begin{subfigure}{.3\textwidth}
\centering
\includegraphics[width=1\linewidth]{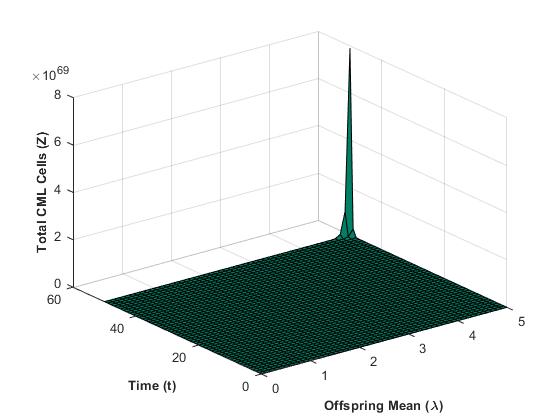}
\caption{$Z_{2}$ as a function of $a$ and $\lambda$}
\label{Fig_05}
\end{subfigure}
\caption{Progression of the leukemic stem cells ($Z_{0}$), the progenitor cells ($Z_{1}$) and the mature leukemic cells ($Z_{2}$) as a function of $a$ and $\lambda$}
\label{Fig_03_04_05}
\end{figure}

We next look at the probability of extinction for various parameter values in Figure \ref{Fig_06}. For the first figure, we choose the ranges of values of $\lambda$ to be $[0,3]$. The extinction is almost certain upto nearly the value of $\lambda=2$. However, beyond this point, there is a very small window of $\lambda$ (among the values of $\lambda a>1$) for which there is a positive probability of extinction, and then the probability is virtually nonexistent (demonstrating extremely poor prognosis).

\begin{figure}[H]
\centering
\includegraphics[width=0.5\linewidth]{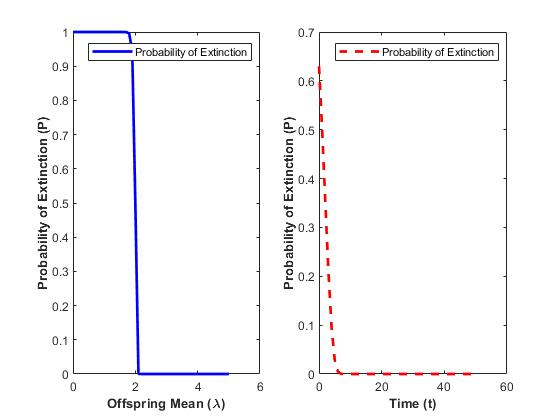}
\caption{Probability of extinction for various values of $\lambda$ and $t$}
\label{Fig_06}
\end{figure}

Finally, we look at the illustration for the results in Theorems \ref{Theorem_Expected_Time_T} and \ref{Theorem_Expected_Time_T12}. For the implementation of the result in Theorem \ref{Theorem_Expected_Time_T}, we consider the values of $0\le M\le 10^{8}$, and plot the corresponding expected times $T$ to reach a level of $M$ leukemic cells can be seen in Figure \ref{Fig_07}. This result helps in determining the expected time at which the CML disease changes from being the less injurious chronic phase, to the more serious and fatal condition based on the percentage of leukemic cells in the patient's blood. For the result in Theorem \ref{Theorem_Expected_Time_T12}, we consider the ranges $0\le T_{12} \le 45$ and $10^{2}\le M_{12} \le 10^{8}$, and present the implicit plot for the same in Figure \ref{Fig_08}. 

\begin{figure}[H]
\centering
\includegraphics[width=0.5\linewidth]{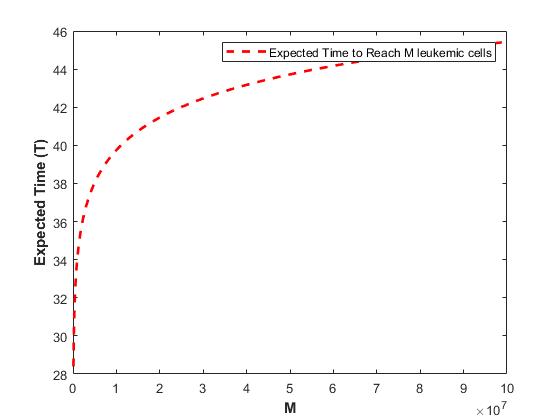}
\caption{Expected times $T$ to reach a level of $M$ leukemic cells}
\label{Fig_07}
\end{figure}

\begin{figure}[H]
\centering
\includegraphics[width=0.5\linewidth]{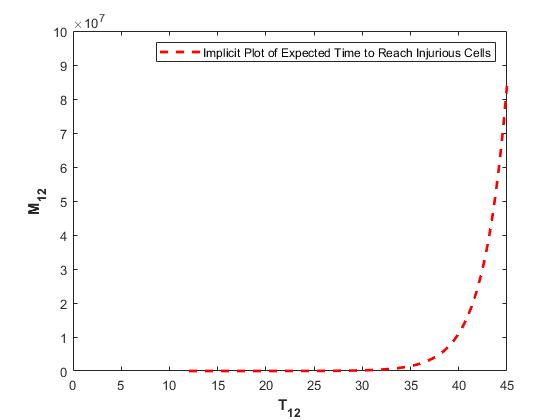}
\caption{Implicit plot of expected time $T_{12}$ to reach $M_{12}$ injurious leukemic cells }
\label{Fig_08}
\end{figure}

\section{Conclusion}
\label{Project_01_Section_Conclusion}

In this study, the evolution of CML is discussed  through various stages of disease progression with three types of leukemic cells. We have analytically calculated the expected number of total leukemic cells, as well as three types of leukemic cells, before and after initiation of the therapy. The expected times to reach a certain number of total leukemic cells and injurious leukemic cells are estimated. The probability of extinction of the CML is also analytically determined. The progression of different leukemic cells are graphically plotted which exhibited a sudden and large increment in Type-$2$ cells, as compared to other two types of cells. The numerical findings also suggests that after a while, the leukemic cells are dominated by the injurious cells. The changes in the number of leukemic cells, contingent on the values of survival probability and offspring mean are shown graphically. The chance of disease extinction is numerically shown for different values of offspring mean, which depicts that the chance of CML extinction is almost impossible without therapy, when the offspring mean exceeds $2$. Thus, the findings obtained can help in determining the percentage of leukemic cells in a patient's blood at a specific time and to diagnose the stage which the patient is going through. Also, it can help in estimating the time needed to go to the next stage of CML, and to predict the condition of the patient in advance, so that the medical practitioner can advice on the appropriate therapeutic intervention.

\section*{Acknowledgment}
\label{Project_01_Section_Acknowledgment}

SPC was supported by Grant. No. MTR/2019/000225 from the Science and Engineering Research Board, India. SR's work was supported partially by Grant No. MTR/2020/000186 from the Science and Engineering Research Board, India.

\end{document}